\documentclass[11pt,a4paper]{article}

\newif\ifdraft
\draftfalse

\usepackage[margin=1in]{geometry}
\usepackage{microtype}
\usepackage{lmodern}

\usepackage{amsmath,amsfonts,amssymb,amsthm}
\usepackage{thmtools,thm-restate}
\usepackage{mathtools}
\usepackage{nicefrac}

\usepackage{graphicx}
\usepackage{xcolor}
\usepackage{booktabs}
\usepackage{tabularx}
\usepackage{enumitem}

\usepackage{tikz}
\usetikzlibrary{calc}
\usepackage{pgfplots}
\pgfplotsset{compat=newest}
\usepackage{pgfplotstable}

\pgfplotsset{
  discard if not/.style 2 args={
    x filter/.code={
      \edef\tempa{\thisrow{#1}}
      \edef\tempb{#2}
      \ifx\tempa\tempb\else
        
      \fi
    }
  }
}
\pgfmathdeclarefunction{lg2}{1}{\pgfmathparse{ln(#1)/ln(2)}}
\pgfmathdeclarefunction{lg10}{1}{\pgfmathparse{ln(#1)/ln(10)}}

\usepackage[ruled,vlined]{algorithm2e}
\DontPrintSemicolon

\usepackage[square,numbers]{natbib}
\usepackage[colorlinks=true,
  linkcolor=red!40!black,
  citecolor=green!40!black,
  urlcolor=blue!40!black,
  filecolor=magenta!40!black,
  pdftitle={Approximate Counting of k-Paths},
  pdfdisplaydoctitle=true
]{hyperref}
\usepackage[nameinlink,noabbrev,sort&compress,capitalize]{cleveref}

\ifdraft
  \usepackage{showlabels}
  \usepackage[backgroundcolor=gray!10,textsize=footnotesize]{todonotes}
\else
  \newcommand{\todo}[2][]{}
\fi

\newtheorem{theorem}{Theorem}
\newtheorem{lemma}[theorem]{Lemma}

\newtheorem*{claim}{Claim}

\theoremstyle{definition}

\crefname{rrule}{Rule}{Rules}

\numberwithin{equation}{section}
\crefname{theorem}{theorem}{theorems}
\Crefname{theorem}{Theorem}{Theorems}
\crefname{lemma}{lemma}{lemmas}
\Crefname{lemma}{Lemma}{Lemmas}
\crefname{corollary}{corollary}{corollaries}
\Crefname{corollary}{Corollary}{Corollaries}

\title{Approximate Counting of $k$-Paths in $O^*(2^k)$ Time}
\author{Tomohiro Koana\\
\small Graduate School of Information Science and Technology,\\
\small The University of Tokyo, Japan}
\date{}

\begin{document}
\maketitle

\begin{abstract}
We resolve the conjecture of Koutis and Williams that $k$-paths can be
approximately counted in $O^*(2^k)$ time.  For a simple directed graph with
$n$ vertices and $m$ arcs, our randomized algorithm returns a
$(1\pm\varepsilon)$-approximation with failure probability at most $\delta$ in
$O^*(2^k\varepsilon^{-2}\log(1/\delta))$ time.
\end{abstract}

\section{Introduction}
\label{sec:introduction}

A directed $k$-path is an ordered tuple $(v_1,\ldots,v_k)$ of distinct
vertices such that $(v_i,v_{i+1})$ is an arc for every $1\leq i<k$.  The
$k$-path problem asks whether a given graph contains such a path.  It is a
benchmark problem in parameterized algorithms~\cite{brandDellHusfeldt2018,lokshtanovSaurabhZehavi2021}:
after the early
$k!\operatorname{poly}(n)$-time algorithm of Monien and the color-coding method
of Alon, Yuster, and Zwick~\cite{alonYusterZwick1995,monien1985}, algebraic
fingerprints led to randomized algorithms with running times
$O^*(2^{3k/2})$~\cite{koutis2008} and $O^*(2^k)$ for directed
graphs~\cite{williams2009}.  Here and throughout, $O^*$ suppresses factors
polynomial in the parameter and the input size, but not the displayed
dependence on $\varepsilon$ and $\delta$.  In contrast, exactly counting
$k$-paths is $\#\mathrm W[1]$-hard for both directed and undirected
graphs~\cite{flumGrohe2004}.  This gap raises a natural intermediate question.
\begin{quote}
Can directed $k$-paths be approximately counted in $O^*(2^k)$ time?
\end{quote}

The existence of such an algorithm was conjectured by
Koutis and Williams~\cite{koutisWilliams2016Fingerprints}.
Color-coding gives a randomized $(1\pm\varepsilon)$-approximation in
$O((2e)^km\varepsilon^{-2})$ time for constant success
probability~\cite{alonEtAl2008}.
Extensor-coding subsequently gave a faster algorithm with a running time of
$O^*(4^k)$~\cite{brandDellHusfeldt2018}.
Using representative weight functions, Lokshtanov, Saurabh, and Zehavi
improved the running time to
$O((2.619^k+n^{o(1)})\varepsilon^{-2}(n+m))$~\cite{lokshtanovSaurabhZehavi2021}.
We resolve the conjecture of Koutis and
Williams:

\begin{restatable}{theorem}{directedpaththeorem}
\label[theorem]{thm:directed-paths}
Let $k\geq1$ and $0<\varepsilon,\delta<1$.  Given a simple directed graph $G$, there
is a randomized algorithm that returns a $(1\pm\varepsilon)$-approximation to
the number of directed $k$-paths in $G$ with success probability at least
$1-\delta$ and runs in $O^*(2^k\varepsilon^{-2}\log(1/\delta))$ time.
\end{restatable}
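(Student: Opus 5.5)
Our plan is to build an unbiased estimator for the number $P_k$ of directed $k$-paths whose second moment is at most $\mathrm{poly}(n,k)\cdot P_k^2$, and which can be sampled in $O^*(2^k)$ time. Averaging $O(\mathrm{poly}\cdot\varepsilon^{-2})$ independent samples then gives a $(1\pm\varepsilon)$-approximation with probability $3/4$ by Chebyshev. Taking the median of $O(\log(1/\delta))$ such averages boosts the success probability to $1-\delta$. This accounts exactly for the factor $\varepsilon^{-2}\log(1/\delta)$ in \cref{thm:directed-paths}.

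For the estimator, we start from Williams' fingerprint for directed paths. Take the walk polynomial
\[
W(x,y)=\sum_{\text{walks } v_1\cdots v_k}\ \prod_i x_{v_i}\prod_{i<k} y_{v_iv_{i+1}}
\]
and evaluate it over a commutative algebra in which $x_v^2=0$. Concretely, we use the group algebra of $\mathbb{Z}_2^k$, tensored with suitable random scalars. Then non-simple walks vanish, and each $k$-path contributes a product of $k$ distinct vectors. The standard evaluation is over characteristic $2$, which only detects. To count, we instead work over a field of characteristic $0$ (or a large prime) and replace the group algebra $\mathbb{F}[\mathbb{Z}_2^k]$ by an exterior-type or extensor-type algebra of dimension $2^k$. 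The aim is that the evaluation of a single path monomial is a random variable $Z_p$ with $\mathbb{E}[Z_p]=1$ and $\mathbb{E}[Z_pZ_q]$ controlled. Via the Walsh--Hadamard transform, the evaluation reduces to $2^k$ scalar dynamic programs over walks, costing $O^*(2^k)$ in total. Concretely, we would assign each vertex a random vector $a_v\in\{\pm1\}^k$ and evaluate in the algebra generated by $e_1,\dots,e_k$ with $e_i^2=0$ commuting: the algebra of square-free monomials, of dimension $2^k$. Set $x_v=\sum_i a_v(i)e_i$. Then $\prod_{j}x_{v_j}$ has coefficient on $e_1\cdots e_k$ equal to the permanent of the $\pm1$ matrix $(a_{v_j}(i))$. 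This permanent vanishes on repeated vertices and has mean $0$, so we multiply by a random sign. Instead, we would use the determinant analog: take anticommuting $e_i$ (exterior algebra), so that the top coefficient is $\det(a_{v_1},\dots,a_{v_k})$. Its square has expectation $k!$, which makes $\det^2/k!$ unbiased. The determinant's sign depends on the ordering, so, following extensor-coding, we use the pair $x_v=a_v\otimes a_v$ in a doubled exterior algebra. This doubling is exactly what costs $4^k$ in prior work.

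The main obstacle is therefore to get an unbiased, low-variance estimator with only a $2^k$-dimensional algebra rather than $4^k$. The first thing we would try is to exploit the directedness and the fixed order of the path. We let each position of the walk be tagged by its index $j$, using layered walks so that the $j$-th vertex contributes $a_v(j)$-indexed variables. Then the sign of the permutation is canonical, the ordering ambiguity disappears, and a single copy of $\bigwedge\mathbb{R}^k$ suffices. The top coefficient $\det(a_{v_1},\dots,a_{v_k})$ is combined with an independent second sample of the same dynamic program, that is, we take a product of two independent copies. The expectation of that product is $\sum_p \mathbb{E}[\det(A_p)\det(B_p)]$, whose cross terms between distinct paths must vanish. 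This requires the $a_v$ to be centered and independent across vertices; in the product, it is the randomness of $a$ together with $b$ that kills the cross terms.

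The bulk of the work is the variance bound. We need to show that $\mathbb{E}[(\sum_p \det A_p\det B_p)^2]\le \mathrm{poly}\cdot(\sum_p k!)^2$. This amounts to bounding fourth-moment correlations of determinants of Gaussian or Rademacher matrices that share rows. We expect the sharing structure of overlapping paths to be the step where most care is needed.
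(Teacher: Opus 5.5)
There is a genuine gap: the estimator you end with does not estimate the number of paths. If $X=\sum_p\det A_p$ and $Y=\sum_q\det B_q$ come from independent families $a$ and $b$, then $\mathbb{E}[XY]=\mathbb{E}[X]\,\mathbb{E}[Y]=0$. Even the diagonal terms vanish, since $\mathbb{E}[\det A_p\det B_p]=\mathbb{E}[\det A_p]\,\mathbb{E}[\det B_p]=0$, so independence kills the signal along with the cross terms.

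If instead you use the same vectors in both factors, i.e.\ you take $X^2$, then cross terms between different vertex sets do vanish. But two paths $p\neq q$ on the same vertex set satisfy $\det A_q=\pm\det A_p$. Hence $\mathbb{E}[X^2]=k!\sum_S\bigl(\sum_{p:V(p)=S}\operatorname{sgn}(p)\bigr)^2$, with signs taken relative to a fixed order of $S$. This is a squared signed count, not $k!$ times the number of paths.

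Position tagging does not fix this. The sign is already canonical, being fixed by the walk order. The real obstruction is that vertex-only labels make all orderings of one vertex set perfectly correlated. If the wedge label depends on the position $j$, a repeated vertex contributes two different vectors, so $y\wedge y=0$ no longer applies and non-simple walks survive. So the labels must stay vertex-only, and the order information has to enter somewhere other than the wedge.

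The missing idea in the paper is to put the order information into scalar random signs contracted along the walk, and then to square a single sample. Let $\Psi(\tau)=c_\tau\det[\sigma_{\tau_1}\cdots\sigma_{\tau_k}]$ with vertex-only sign vectors $\sigma_v$. The paper sketches this ordered tensor mode by mode with sign matrices $R_1\in\{-1,1\}^{\ell\times n}$ and $R_t\in\{-1,1\}^{\ell\times(\ell\times n)}$, where $\ell=2k$, obtaining $S_k\in\mathbb{R}^\ell$, and uses $\|S_k\|_2^2/(\ell^kk!)$.

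Each step is a Johnson--Lindenstrauss-type sketch of a flattening. In expectation $\mathbb{E}_R\|S_k\|_2^2=\ell^k\|\Psi\|_{\mathrm F}^2=\ell^k\sum_\tau c_\tau D_\tau^2$, so distinct orderings no longer interact. The fourth moment grows by at most $\ell^2(1+2/\ell)$ per step, for a total factor $(1+1/k)^k\le e$.

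The width matters. One position-dependent sign per vertex, which is roughly what your tagging would become if moved out of the wedge, is the case $\ell=1$, and the same argument then gives only a $3^k$ bound. The sketch is computed alongside the exterior-algebra evaluation by storing $\ell$ elements of $\Lambda(\mathbb{Z}^k)$ per gate via $F_{g,a}=\bigl(\sum_bR_i[a,(b,j)]F_{h,b}\bigr)\wedge\sigma_j$, so each sample costs $O^*(2^k)$.

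Finally, the variance step you flag as the hard part is easy once the estimator is right. Cauchy--Schwarz gives $\|\Psi\|_{\mathrm F}^4\le N\sum_\tau D_\tau^4$, and the Nyquist--Rice--Riordan bound $\mathbb{E}\det(B)^4\le(k!)^2k^3$ finishes it. No analysis of overlapping paths is needed.
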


The $k$-path algorithm is a consequence of a more general theorem for counting
multilinear monomials in polynomials computed by homogeneous $0$--$1$
right-skew circuits.
Associate a directed walk $(v_1,\ldots,v_k)$ with the noncommutative monomial
$x_{v_1}\cdots x_{v_k}$.  The walk is a path exactly when this monomial is
multilinear, meaning that no variable occurs twice.
Noncommutativity preserves
the vertex order, so different walks do not collapse into the same monomial.

We call a noncommutative polynomial a \emph{$0$--$1$ polynomial} if all its
coefficients belong to $\{0,1\}$.  For such a polynomial $P$, let $N(P)$ be
the number of its multilinear monomials.  A homogeneous $0$--$1$ right-skew
circuit is built from constants $0$ and $1$, addition, and product gates that
multiply a previously computed polynomial from the right by one variable.
Every gate is required to compute a $0$--$1$ polynomial.  See
\Cref{sec:circuit-model} for the formal model.

\begin{restatable}{theorem}{rightskewtheorem}
\label{thm:right-skew}
Let $P$ be a noncommutative $0$--$1$ polynomial of degree $k\geq1$ computed
by a homogeneous $0$--$1$ right-skew circuit $C$.  For
$0<\varepsilon,\delta<1$, there is a randomized algorithm that, given $C$,
returns $\widehat N$ such that
$\Pr[|\widehat N-N(P)|\leq\varepsilon N(P)]\geq1-\delta$ and runs in
$O^*(2^k\varepsilon^{-2}\log(1/\delta))$ time.
\end{restatable}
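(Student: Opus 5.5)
The plan is to construct an unbiased estimator of $N(P)$ that can be evaluated gate by gate in the exterior algebra $\Lambda(\mathbb{R}^k)$, which has dimension $2^k$. We then average independent copies and take a median of means to reach the claimed running time.

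\textbf{The substitution.} Homogeneity means every gate has a well-defined degree. So when a product gate multiplies a degree-$(i-1)$ polynomial on the right by $x_v$, the variable $x_v$ sits at position $i$ in every monomial passing through that gate. We draw independent standard Gaussian vectors $a_v\in\mathbb{R}^k$, one per variable, and independent standard Gaussian scalars $g_{v,i}$, one per (variable, position) pair. We then substitute $x_v$ at position $i$ by the element $g_{v,i}a_v$ of $\Lambda^1(\mathbb{R}^k)$, and replace the product by the wedge product. Each gate of degree $i$ stores an element of $\Lambda^i(\mathbb{R}^k)$. An addition gate costs $O(2^k)$ and a right multiplication by a vector costs $O(k2^k)$, so one evaluation takes $O^*(2^k)$ time.

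\textbf{What the output computes.} Because wedge products of vectors are multilinear and alternating, a monomial $x_{v_1}\cdots x_{v_k}$ evaluates to
\[
X_{v}=\Bigl(\prod_{i=1}^k g_{v_i,i}\Bigr)\det[a_{v_1},\dots,a_{v_k}],
\]
and this vanishes whenever some $v_i$ repeats. Since all coefficients lie in $\{0,1\}$, the output is $X=\sum_{v}X_v$, where the sum runs over the multilinear monomials of $P$ only.

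\textbf{Second moment.} Take two distinct monomials $v\neq w$. They differ at some position $i$, so $g_{v_i,i}$ and $g_{w_i,i}$ are independent, mean zero, and independent of everything else. Hence $\mathbb{E}[X_vX_w]=0$. This is exactly where the per-position randomness removes the correlations between permutations of the same vertex set. For a single monomial, $\mathbb{E}[X_v^2]=\mathbb{E}[\det(A)^2]=k!$ for a $k\times k$ Gaussian matrix $A$. Therefore $Z:=X^2/k!$ satisfies $\mathbb{E}[Z]=N(P)$.

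\textbf{Concentration.} By Chebyshev and the median trick, $O(\varepsilon^{-2}\log(1/\delta))$ samples suffice once we show $\mathbb{E}[Z^2]\le \mathrm{poly}(k)\,N(P)^2$.

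\textbf{Fourth moment (main obstacle).} We expand
\[
\mathbb{E}[X^4]=\sum_{v,w,y,z}\mathbb{E}\Bigl[\prod_i g_{v_i,i}g_{w_i,i}g_{y_i,i}g_{z_i,i}\Bigr]\cdot\mathbb{E}\bigl[\det A_v\det A_w\det A_y\det A_z\bigr].
\]
The $g$-expectation is nonzero only if, at every position $i$, the four symbols $v_i,w_i,y_i,z_i$ pair up. That is, they form a multiset of type $\{a,a,b,b\}$ or $\{a,a,a,a\}$.

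The diagonal terms $v=w=y=z$ contribute $3^k\,\mathbb{E}[\det^4]$, which is exponential. This is a problem, so I would first replace the Gaussian $g$ by Rademacher signs. Then $g^4=1$, and the diagonal contributes $N\,\mathbb{E}[\det^4]=N\,k!(k+2)!/2=\mathrm{poly}(k)\,N\,(k!)^2$.

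For the remaining terms, the pairing condition forces the quadruple to be a positionwise interleaving of two pairings. Here I would bound $\mathbb{E}[\det A_v\det A_w\det A_y\det A_z]$ by Cauchy--Schwarz (or Hölder), giving at most $\mathbb{E}[\det^4]$. The real difficulty is counting such quadruples: there can be many more than $3N^2$ of them when monomials share long prefixes and suffixes. If the crude count fails, I would try either of two fixes:
\begin{itemize}
\item exploit cancellation in the determinant expectations via the Cauchy--Binet/Wick formula;
\item switch to complex Gaussian vectors, so that $\mathbb{E}[X_v\overline{X_w}\dots]$ forces exact matching of the multiset of columns.
\end{itemize}
I expect the second fix to give $\mathbb{E}|X|^4\le \mathrm{poly}(k)\bigl(\mathbb{E}|X|^2\bigr)^2$ by reducing the fourth moment to a sum over pairs of monomials with a polynomially bounded weight.
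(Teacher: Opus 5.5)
\textbf{There is a genuine gap at the fourth-moment step, and it cannot be closed.} Your unbiasedness computation is correct. But the bound $\mathbb E(Z^2)\le\operatorname{poly}(k)N(P)^2$, which you flag as the main obstacle, is false for your estimator.

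Let $k$ be even and $P=\prod_{b=1}^{k/2}(x_{u_b}x_{v_b}+x_{v_b}x_{u_b})$, with $k$ distinct variables. For odd $k$, append one fresh variable. This $P$ is computed by the homogeneous $0$--$1$ right-skew circuit $Q_b=(Q_{b-1}x_{u_b})x_{v_b}+(Q_{b-1}x_{v_b})x_{u_b}$, and $N(P)=2^{k/2}$. Because the wedge product is alternating, your output is
\[
  X=\det[a_{u_1},a_{v_1},\ldots,a_{u_{k/2}},a_{v_{k/2}}]\prod_{b=1}^{k/2}(\epsilon_b-\epsilon_b'),
  \qquad
  \epsilon_b=g_{u_b,2b-1}g_{v_b,2b},\quad \epsilon_b'=g_{v_b,2b-1}g_{u_b,2b}.
\]
With Rademacher $g$, the $\epsilon_b,\epsilon_b'$ are independent signs, so each factor $\epsilon_b-\epsilon_b'$ vanishes independently with probability $1/2$. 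Hence $X=0$ with probability at least $1-2^{-k/2}$, and $\mathbb E(Z^2)/\mathbb E(Z)^2\ge 2^{k/2}$. Polynomially many samples are then all zero with high probability.

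The per-position signs do decorrelate the two orderings of a pair in the second moment. In the fourth moment, however, the two orderings cancel or reinforce, and this costs a constant factor per block. Since the inequality itself fails, no sharper count of interleaved quadruples and no Wick or Cauchy--Binet cancellation can prove it. Making $a$ complex Gaussian leaves the factor $\prod_b(\epsilon_b-\epsilon_b')$ unchanged. Even unimodular complex $g$ still loses a factor $3/2$ per block.

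\textbf{The missing idea is to use more than one random scalar per position.} With sign vectors $a_v$, your estimator is exactly the paper's compression with $\ell=1$. For $\ell=1$ the paper's one-step claim gives only $\mathbb E\|RX\|_{\mathrm F}^4\le 3\,\mathbb E\|X\|_{\mathrm F}^4$, and the example above shows that such a constant loss per step really occurs.

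The paper instead stores $\ell=2k$ channels $F_{g,a}\in\Lambda^i(\mathbb Z^k)$, $a\in[\ell]$, at every gate. It mixes channel $b$ of the predecessor into channel $a$ through independent signs $R_i[a,(b,j)]$. Each step multiplies $\mathbb E\|\cdot\|_{\mathrm F}^2$ by $\ell$ and $\mathbb E\|\cdot\|_{\mathrm F}^4$ by at most $\ell^2(1+2/\ell)$. So the normalized fourth moment grows by at most $(1+1/k)^k\le e$ over all $k$ steps.

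The starting bound $\mathbb E\|\Psi\|_{\mathrm F}^4\le(k!)^2k^3N^2$ comes from Cauchy--Schwarz over the $N$ multilinear monomials together with the determinant moment bound. Interleaved quadruples therefore never need to be counted. The extra cost is only a $\operatorname{poly}(k)$ factor per gate.
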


\paragraph{Related work.}
Arvind and Raman gave early randomized parameterized counting approximation algorithms
for patterns of bounded treewidth, including
paths~\cite{arvindRaman2002}.  Alon and Gutner later used balanced families of
perfect hash functions to derandomize color-coding and obtain deterministic
multiplicative approximations for counting
paths~\cite{alonGutner2010}.

Using lifted labels in the exterior algebra on $\mathbb Z^{2k}$,
Brand, Dell, and Husfeldt gave a randomized $O^*(4^k\varepsilon^{-2})$-time
approximation for directed $k$-path~\cite{brandDellHusfeldt2018}.
Lokshtanov, Bj\"orklund, Saurabh, and Zehavi subsequently obtained
polynomial-space algorithms for directed and undirected $k$-path counting.
Their randomized running time was
$O^*(4^{k+O(\log k(\log k+\log(1/\varepsilon)))})$, whereas their
deterministic running time was
$O^*(4^{k+O(\sqrt{k}(\log^2 k+\log^2(1/\varepsilon)))})$
\cite{lokshtanovEtAl2021PolynomialSpace}.

Lokshtanov, Saurabh, and Zehavi~\cite{lokshtanovSaurabhZehavi2021} used
representative weight functions to obtain the
$O^*(2.619^k\varepsilon^{-2})$-time approximation for $k$-path.
More generally, for a homogeneous degree-$k$ commutative polynomial represented
by a monotone arithmetic circuit $C$, they gave a randomized
$(1\pm\varepsilon)$-approximation to the sum of the coefficients of its
multilinear monomials in
$O^*(3.841^k\varepsilon^{-6})$ time.
Pratt gave a randomized
constant-success $(1\pm\varepsilon)$-approximation for the corresponding
coefficient sum of a homogeneous commutative polynomial with nonnegative real
coefficients, given by black-box evaluation, in
$O^*(4.075^k\varepsilon^{-2}\log(1/\varepsilon))$ time and polynomial
space~\cite{pratt2019}.

For the decision problem, group-algebra fingerprints annihilated
nonmultilinear terms and yielded randomized multilinear detection
algorithms~\cite{koutis2008,koutisWilliams2016Fingerprints,williams2009}.
Brand and Pratt developed a symbolic-differentiation framework for skew
circuits based on determinants~\cite{brandPratt2021}.  Determinantal sieving
detected multilinear monomials whose supports formed bases of a linear matroid.
Over general fields, it evaluated a circuit over an exterior
algebra~\cite{eibenkoanaWahlstrom2025}.  Dynamic representative sets gave a
deterministic
$O^*(2^{k+O(\sqrt{k}\log^2 k)})$-time algorithm for weighted directed
$k$-path in the word-RAM model, with every edge weight stored in one
word~\cite{nederlof2025}.  For undirected $k$-path, narrow sieves gave a
polynomial-space randomized $O^*(1.66^k)$-time decision algorithm with
constant one-sided error~\cite{bjorklundHusfeldtkaskikoivisto2017}.

\paragraph{Organization.}
\Cref{sec:preliminaries} formalizes the polynomial, tensor, circuit, and
exterior-algebra notation used later.  \Cref{sec:estimator}
defines the estimator and proves its approximation guarantee.
\Cref{sec:implementation} gives its circuit implementation and proves
\Cref{thm:right-skew}.  \Cref{sec:applications}
derives \Cref{thm:directed-paths,thm:set-packing}.

\section{Preliminaries}
\label{sec:preliminaries}

\paragraph{Polynomials and tensors.}
\label{sec:circuit-model}

Fix $n\geq1$.  For a positive integer $r$, let $[r]=\{1,\ldots,r\}$.  We work
in the noncommutative polynomial ring
$\mathbb Z\langle x_1,\ldots,x_n\rangle$.  For every positive integer $i$,
each degree-$i$ monomial is uniquely indexed by a tuple
$\tau=(\tau_1,\ldots,\tau_i)\in[n]^i$, and we write
$x_\tau=x_{\tau_1}\cdots x_{\tau_i}$.  The monomial $x_\tau$ is
\emph{multilinear} if the entries of $\tau$ are pairwise distinct.

A homogeneous degree-$i$ polynomial has the form
$P=\sum_{\tau\in[n]^i}c_\tau x_\tau$.  It is a \emph{$0$--$1$ polynomial} if
$c_\tau\in\{0,1\}$ for every $\tau\in[n]^i$.  For such a polynomial, define
$N(P)=\sum_{\tau\in[n]^i:x_\tau\text{ multilinear}}c_\tau$.  The polynomial
$P$ can equivalently be represented by its coefficient tensor
$(c_\tau)_{\tau\in[n]^i}\in\mathbb Z^{[n]^i}$.  More generally, a real tensor
$A\in\mathbb R^{I_1\times\cdots\times I_r}$ is an array indexed by finite sets
$I_1,\ldots,I_r$.  Its Frobenius norm is
\[
  \|A\|_{\mathrm F}
  =\left(\sum_{(a_1,\ldots,a_r)\in I_1\times\cdots\times I_r}
    A(a_1,\ldots,a_r)^2\right)^{1/2}.
\]
To \emph{flatten} a tensor with specified row coordinates means to view the
same entries as a matrix whose row index is the Cartesian product of these
coordinates and whose column index is the Cartesian product of the remaining
coordinates.  Flattening only reindexes the entries and therefore preserves
the Frobenius norm.

\paragraph{Circuits.}
A \emph{homogeneous $0$--$1$ right-skew circuit} is a directed acyclic graph
with a designated output gate.  Each gate is a constant input gate labeled
$0$ or $1$, a binary addition gate $g=h_1+h_2$, or a product gate $g=h x_j$
with $j\in[n]$, where $h,h_1,h_2$ are predecessor gates of $g$.  These
operations recursively define the noncommutative polynomial computed at each
gate, and we require every such polynomial to be a $0$--$1$ polynomial.  Every
gate also has a syntactic degree.  Constants have degree zero, the two inputs
and the output of an addition gate have the same degree, and $h x_j$ has degree
one more than $h$.

\paragraph{Exterior algebra.}
\label{sec:exterior-algebra}

We adapt the concrete presentation of exterior algebra from Brand, Dell, and
Husfeldt~\cite[Section~2]{brandDellHusfeldt2018} to the coefficient ring
$\mathbb Z$.
Let $e_1,\ldots,e_k$ be the standard basis of $\mathbb Z^k$.  For every
$I\subseteq[k]$, let $e_I$ be a formal basis element, where
$e_\varnothing=1$ and $e_{\{j\}}=e_j$.  For $0\leq r\leq k$, the $r$th
\emph{exterior power} $\Lambda^r(\mathbb Z^k)$ is the free $\mathbb Z$-module
with basis $\{e_I:I\subseteq[k],\ |I|=r\}$.  The \emph{exterior algebra} is
$\Lambda(\mathbb Z^k)=\bigoplus_{r=0}^k\Lambda^r(\mathbb Z^k)$.
We identify $\mathbb Z^k$ with $\Lambda^1(\mathbb Z^k)$ through the basis
$e_1,\ldots,e_k$.

The \emph{wedge product} is the $\mathbb Z$-bilinear multiplication determined
on basis elements by
\[
  e_I\wedge e_J=
  \begin{cases}
    0,&I\cap J\neq\varnothing,\\
    (-1)^{|\{(a,b)\in I\times J:a>b\}|}e_{I\cup J},
      &I\cap J=\varnothing.
  \end{cases}
\]
It is associative, respects the grading
$\Lambda^r(\mathbb Z^k)\wedge\Lambda^s(\mathbb Z^k)
\subseteq\Lambda^{r+s}(\mathbb Z^k)$, and is alternating on vectors.  In
particular, $y\wedge y=0$ for every $y\in\mathbb Z^k$.  If
$I=\{i_1<\cdots<i_r\}$, then
$e_I=e_{i_1}\wedge\cdots\wedge e_{i_r}$.  Thus
$\Lambda^r(\mathbb Z^k)$ has $\binom{k}{r}$ coordinates and the full exterior
algebra has $2^k$ coordinates.  Finally, for
$y_1,\ldots,y_k\in\mathbb Z^k$, expanding the wedge product gives
\[
  y_1\wedge\cdots\wedge y_k
  =\det[y_1\ \cdots\ y_k]e_{[k]}.
\]

\section{The algorithm}

In this section, we prove \Cref{thm:right-skew}.

\rightskewtheorem*

Throughout this section, fix a polynomial $P$ and a homogeneous $0$--$1$
right-skew circuit $C$ computing it as in \Cref{thm:right-skew}, with variables
$x_1,\ldots,x_n$.  For every $\tau\in[n]^k$, let $c_\tau$ be the coefficient
of $x_\tau$ in $P$.  Let $N=N(P)$, set $\ell=2k$, and fix
$0<\varepsilon,\delta<1$.  In this section, $O^*$ suppresses factors
polynomial in $k$ and in the encoding size of $C$, but not the displayed
dependence on $\varepsilon$ and $\delta$.

Our algorithm returns $0$ whenever $N=0$, so we assume $N>0$ below.

We call a random variable $Y$ an \emph{efficient estimator} of $N$ if
$\mathbb E(Y)=N>0$ and $\mathbb E(Y^2)/(\mathbb E(Y))^2$ is bounded by
$k^{O(1)}$.  The standard median-of-means bound shows that
\[
  O\!\left(\frac{\mathbb E(Y^2)}{(\mathbb E(Y))^2}
    \varepsilon^{-2}\log(1/\delta)\right)
\]
independent samples of $Y$ suffice to compute $\widehat N$ such that
$\Pr[|\widehat N-N|>\varepsilon N]\leq\delta$.  See, for example,
Mitzenmacher and Upfal~\cite[Sections~3.3--3.4]{mitzenmacherUpfal2017}.
It therefore suffices to construct an efficient estimator of $N$ and compute
each sample in $O^*(2^k)$ time.

\subsection{Constructing an efficient estimator}
\label{sec:estimator}

Let $\sigma_{j,r}$, $j\in[n]$ and $r\in[k]$, be mutually independent random
variables, each uniform on $\{-1,1\}$, and define
$\sigma_j=(\sigma_{j,1},\ldots,\sigma_{j,k})^{\mathsf T}\in\mathbb Z^k$.
For every $\tau=(\tau_1,\ldots,\tau_k)\in[n]^k$, let
$D_\tau=\det[\sigma_{\tau_1}\ \cdots\ \sigma_{\tau_k}]$.  Define
$\Psi\in\mathbb R^{[n]^k}$ by $\Psi(\tau)=c_\tau D_\tau$.  Recall that
its Frobenius norm
is $\|\Psi\|_{\mathrm F}=(\sum_{\tau\in[n]^k}\Psi(\tau)^2)^{1/2}$.  Since
$c_\tau\in\{0,1\}$ for every $\tau\in[n]^k$, we have
\[
  \|\Psi\|_{\mathrm F}^2
  =\sum_{\tau\in[n]^k}c_\tau^2D_\tau^2
  =\sum_{\tau\in[n]^k}c_\tau D_\tau^2.
\]
If $x_\tau$ is not multilinear, then the matrix defining $D_\tau$ has two
equal columns, so $D_\tau=0$.

We first show that the normalized squared Frobenius norm
$\|\Psi\|_{\mathrm F}^2/k!$ is an efficient estimator of $N$.
The algorithm of Brand, Dell, and Husfeldt uses this estimator~\cite[Section~3.6]{brandDellHusfeldt2018}.  We include the proof in
our setting for completeness.

\begin{lemma}
\label[lemma]{lem:determinant-moments}
$\mathbb E(\|\Psi\|_{\mathrm F}^2)=k!N$ and
$\mathbb E(\|\Psi\|_{\mathrm F}^4)\leq(k!)^2k^3N^2$.
\end{lemma}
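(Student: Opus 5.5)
The plan is to compute the two moments directly by expanding determinants, as in the analysis of random $\pm1$ determinants. Since $\|\Psi\|_{\mathrm F}^2=\sum_{\tau}c_\tau D_\tau^2$ and $D_\tau=0$ whenever $x_\tau$ is not multilinear, it suffices to understand $D_\tau^2$ for multilinear $\tau$. For such $\tau$, the columns $\sigma_{\tau_1},\ldots,\sigma_{\tau_k}$ are distinct vectors of independent signs, so the $k\times k$ matrix has i.i.d.\ uniform $\pm1$ entries.

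\textbf{First moment.} Expand $D_\tau=\sum_{\pi\in S_k}\operatorname{sgn}(\pi)\prod_{r}\sigma_{\tau_r,\pi(r)}$. Squaring and taking expectations, a cross term for $\pi\neq\pi'$ contains some entry to the first power and has zero mean. Each diagonal term equals $1$. Hence $\mathbb E(D_\tau^2)=k!$, and by linearity $\mathbb E\|\Psi\|_{\mathrm F}^2=k!\,N$.

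\textbf{Second moment, reduction.} Write
\[
\mathbb E\|\Psi\|_{\mathrm F}^4=\sum_{\tau,\tau'}c_\tau c_{\tau'}\,\mathbb E(D_\tau^2D_{\tau'}^2),
\]
with both sums over multilinear tuples. By Cauchy--Schwarz, $\mathbb E(D_\tau^2D_{\tau'}^2)\le(\mathbb E D_\tau^4\,\mathbb E D_{\tau'}^4)^{1/2}=\mathbb E(D^4)$, where $D$ is the determinant of a $k\times k$ matrix with i.i.d.\ random signs. Since the number of pairs $(\tau,\tau')$ with $c_\tau c_{\tau'}=1$ is exactly $N^2$, it suffices to prove $\mathbb E(D^4)\le (k!)^2k^3$.

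\textbf{Fourth moment of $D$.} Use the row-by-row expansion, or equivalently the Gram--Schmidt/volume argument. Write $D^2=\prod_{i=1}^k\mathrm{dist}(v_i,V_{i-1})^2$, where $v_i$ is the $i$th column and $V_{i-1}$ is the span of the previous columns. Conditioned on $V_{i-1}$, let $Q$ be the orthogonal projection onto $V_{i-1}^\perp$, which has rank $d=k-i+1$ when the earlier columns are independent (otherwise $D=0$). Then $\mathrm{dist}^2=v_i^{\mathsf T}Qv_i$.

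For a symmetric projection $Q$ and a vector $v$ of independent signs, $\mathbb E(v^{\mathsf T}Qv)=\operatorname{tr}Q=d$. Expanding the fourth-order sum gives
\[
\mathbb E\big((v^{\mathsf T}Qv)^2\big)=(\operatorname{tr}Q)^2+2\|Q\|_{\mathrm F}^2-2\sum_i Q_{ii}^2\le d^2+2d.
\]
Conditioning successively on the earlier columns yields
\[
\mathbb E(D^4)\le\prod_{d=1}^k d(d+2)=k!\cdot\frac{(k+2)!}{2}=(k!)^2\frac{(k+1)(k+2)}{2}\le (k!)^2k^3
\]
for $k\ge2$. For $k=1$, $\mathbb E(D^4)=1$ directly.

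\textbf{Main obstacle.} The delicate step is the fourth-moment bound: making sure the conditional computation is legitimate. The columns are independent, so conditioning on $v_1,\ldots,v_{i-1}$ leaves $v_i$ uniform. The sign-vector fourth-moment identity must also be computed correctly, in particular the $-2\sum_i Q_{ii}^2$ correction coming from $\sigma^4=1$ rather than the Gaussian value $3$. If this computation gets messy, a fallback is to cite the known formula $\mathbb E(D^4)$ for random $\pm1$ matrices (Turán, Nyquist--Rice--Riordan), which is $(k!)^2\cdot O(k^2)$.
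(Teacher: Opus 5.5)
Your proof is correct. It reaches the lemma by a different route from the paper at the one step that matters, the fourth moment of a random sign determinant.

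The paper does not prove the determinant moments. It cites Nyquist, Rice, and Riordan for $\mathbb E(\det(B)^2)=k!$ and $\mathbb E(\det(B)^4)\le(k!)^2k^3$. It then applies Cauchy--Schwarz pointwise, $(\sum_{\tau\in\mathcal M}D_\tau^2)^2\le N\sum_{\tau\in\mathcal M}D_\tau^4$, before taking expectations. Your per-pair bound $\mathbb E(D_\tau^2D_{\tau'}^2)\le\mathbb E(D^4)$, obtained by Cauchy--Schwarz in expectation, is an equivalent way to reach $N^2\,\mathbb E(D^4)$.

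Where you genuinely diverge is in proving both moment bounds from scratch:
\begin{itemize}
\item the second moment by the permutation expansion;
\item the fourth moment by the volume identity $D^2=\prod_i v_i^{\mathsf T}Q_iv_i$, together with $\mathbb E((v^{\mathsf T}Qv)^2)=(\operatorname{tr}Q)^2+2\|Q\|_{\mathrm F}^2-2\sum_uQ_{uu}^2\le d(d+2)$ for a rank-$d$ projection and a sign vector $v$.
\end{itemize}
Your treatment of the degenerate case is right: the prefix product vanishes whenever the earlier columns are dependent, so you may assume $Q_i$ has rank $k-i+1$. Your separate treatment of $k=1$ is also right. The resulting bound $(k!)^2(k+1)(k+2)/2$ is exactly the Gaussian fourth moment, and it is sharper than the paper's $(k!)^2k^3$.

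The paper's route buys brevity. Yours buys a self-contained argument. Its key step is the same quadratic-form second-moment computation the paper already carries out in the claim inside the proof of \Cref{lem:estimator-guarantee}, namely $\operatorname{Var}(Q_j)\le2\|M\|_{\mathrm F}^2$ with $\|M\|_{\mathrm F}^2\le(\operatorname{tr}M)^2$ there. So the external citation could be dropped at essentially no cost.
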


\begin{proof}
Fix $\tau\in[n]^k$ such that $x_\tau$ is multilinear.  Since
$\tau_1,\ldots,\tau_k$ are distinct, $D_\tau$ has the same distribution as
$\det(B)$, where $B$ is a $k\times k$ \emph{random sign matrix} with
independent entries uniform in $\{-1,1\}$.  The determinant-moment formulas
of Nyquist, Rice, and Riordan~\cite{nyquistRiceRiordan1954} imply
\[
  \mathbb E(\det(B)^2)
  =k!
  \qquad\text{and}\qquad
  \mathbb E(\det(B)^4)\leq(k!)^2k^3.
\]

Let $\mathcal M$ be the set of $\tau\in[n]^k$ such that $c_\tau=1$ and
$x_\tau$ is multilinear.  Then $|\mathcal M|=N$.  Since $D_\tau=0$ when
$x_\tau$ is not multilinear,
\[
  \|\Psi\|_{\mathrm F}^2=\sum_{\tau\in\mathcal M}D_\tau^2.
\]
Taking expectations gives
$\mathbb E(\|\Psi\|_{\mathrm F}^2)=k!N$.  Moreover,
Cauchy--Schwarz gives
\[
  \|\Psi\|_{\mathrm F}^4
  =\left(\sum_{\tau\in\mathcal M}D_\tau^2\right)^2
  \leq\left(\sum_{\tau\in\mathcal M}1\right)
       \left(\sum_{\tau\in\mathcal M}D_\tau^4\right)
  =N\sum_{\tau\in\mathcal M}D_\tau^4.
\]
Taking expectations and using
$\mathbb E(D_\tau^4)\leq(k!)^2k^3$ for every $\tau\in\mathcal M$ gives
$\mathbb E(\|\Psi\|_{\mathrm F}^4)\leq(k!)^2k^3N^2$.
\end{proof}

The extensor-coding construction of Brand, Dell, and
Husfeldt computes $\|\Psi\|_{\mathrm F}^2$ in $O^*(4^k)$ arithmetic
operations~\cite{brandDellHusfeldt2018}.  It uses
\emph{lifts} that map each
$\sigma_j\in\mathbb Z^k\subseteq\Lambda(\mathbb Z^k)$ to an element of
$\Lambda^2(\mathbb Z^{2k})$ and evaluates $C$ over
$\Lambda(\mathbb Z^{2k})$.  We instead compress $\Psi$ using random sign
matrices.

\paragraph{Compression by random sign matrices.}
Independently of the vectors $\sigma_1,\ldots,\sigma_n$, let
$R_1\in\{-1,1\}^{[\ell]\times[n]}$ and
$R_i\in\{-1,1\}^{[\ell]\times([\ell]\times[n])}$ for every $2\leq i\leq k$
be random matrices whose entries are mutually independent and uniform.
Flatten $\Psi$ with its first coordinate as the row coordinate,
multiply it from the left by $R_1$, and let the resulting tensor be
$S_1\in\mathbb R^{[\ell]\times[n]^{k-1}}$.  Here and below, a factor $[n]^0$
is omitted.  For each $i=2,\ldots,k$, flatten
$S_{i-1}\in\mathbb R^{[\ell]\times[n]^{k-i+1}}$ with its first two coordinates
as the row coordinates, multiply it from the left by $R_i$, and denote the
result by $S_i\in\mathbb R^{[\ell]\times[n]^{k-i}}$.  Thus
$S_k\in\mathbb R^{[\ell]}\cong\mathbb R^\ell$.  For every finite-dimensional real
vector $y$, write $\|y\|_2^2$ for the sum of the squares of its coordinates.
The final vector is given, for every $a_k\in[\ell]$, by
\begin{equation}
\label{eq:compression-coordinates}
  S_k(a_k)
  =\sum_{\tau\in[n]^k}\Psi(\tau)
   \sum_{a_1,\ldots,a_{k-1}\in[\ell]}
   R_1[a_1,\tau_1]
   \left(\prod_{t=2}^kR_t[a_t,(a_{t-1},\tau_t)]\right).
\end{equation}

We now verify that $S_k$ yields an efficient estimator of $N$.  The key point
is that, after normalization, successive compression by random sign matrices
preserves the expected squared Frobenius norm and increases its fourth moment
by only a constant factor.

\begin{lemma}
\label[lemma]{lem:estimator-guarantee}
$\|S_k\|_2^2/(\ell^k k!)$ is an efficient estimator of $N$.
\end{lemma}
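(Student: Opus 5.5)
Conditioning on $\sigma_1,\ldots,\sigma_n$ fixes $\Psi$; I will show that each compression step preserves the squared Frobenius norm in expectation up to the factor $\ell$ and inflates the fourth moment by at most a constant factor. Then I will combine this with \Cref{lem:determinant-moments} via the tower property.

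\emph{Single step.} Let $M$ be a fixed real matrix with row index set $J$, and let $R\in\{-1,1\}^{[\ell]\times J}$ have independent uniform entries. Write $Z_a=\sum_{j}R[a,j]M_{j,\cdot}$ for row $a$ of $RM$. Expanding $\mathbb E\|Z_a\|_2^2$, the cross terms vanish because $\mathbb E(R[a,j]R[a,j'])=[j=j']$, so $\mathbb E\|Z_a\|_2^2=\|M\|_{\mathrm F}^2$. Summing over $a$ gives $\mathbb E\|RM\|_{\mathrm F}^2=\ell\|M\|_{\mathrm F}^2$.

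For the fourth moment, write $\|RM\|_{\mathrm F}^2=\sum_a\|Z_a\|_2^2$. Squaring produces terms with $a\neq a'$, which are independent, and diagonal terms. The diagonal terms satisfy $\mathbb E\|Z_a\|_2^4\le 3\|M\|_{\mathrm F}^4$. This follows from the standard computation for Rademacher quadratic forms: with Gram matrix $G=MM^{\mathsf T}$, we have $\|Z_a\|^2=r^{\mathsf T}Gr$ and $\mathbb E(r^{\mathsf T}Gr)^2=(\operatorname{tr}G)^2+2\sum_{j\ne j'}G_{jj'}^2\le(\operatorname{tr}G)^2+2\|G\|_{\mathrm F}^2\le3(\operatorname{tr}G)^2$, using $\|G\|_{\mathrm F}\le\operatorname{tr}G$ for positive semidefinite $G$. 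Altogether,
\[
  \mathbb E\|RM\|_{\mathrm F}^4\le \ell(\ell-1)\|M\|_{\mathrm F}^4+3\ell\|M\|_{\mathrm F}^4
  =\ell^2\Bigl(1+\tfrac{2}{\ell}\Bigr)\|M\|_{\mathrm F}^4 .
\]

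\emph{Iteration.} Apply the single step to $S_i$ conditioned on $\sigma$ and on $R_1,\ldots,R_{i-1}$, with $M$ the appropriate flattening of $S_{i-1}$ (or of $\Psi$ when $i=1$). Flattening preserves the Frobenius norm and $R_i$ is independent of the earlier randomness, so the tower property gives $\mathbb E\|S_k\|_2^2=\ell^k\,\mathbb E\|\Psi\|_{\mathrm F}^2=\ell^kk!N$. Similarly,
\[
  \mathbb E\|S_k\|_2^4\le\ell^{2k}\Bigl(1+\tfrac{2}{\ell}\Bigr)^k\mathbb E\|\Psi\|_{\mathrm F}^4\le\ell^{2k}e^{2k/\ell}(k!)^2k^3N^2 .
\]
Because $\ell=2k$, we have $e^{2k/\ell}=e$, and the ratio $\mathbb E(Y^2)/(\mathbb E Y)^2$ for $Y=\|S_k\|_2^2/(\ell^kk!)$ is at most $ek^3=k^{O(1)}$. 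Also $\mathbb E(Y)=N>0$, so $Y$ is an efficient estimator.

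The main obstacle is the fourth-moment bound for one step. The choice $\ell=2k$ is exactly what keeps the accumulated factor $(1+c/\ell)^k$ constant, so the constant $3$ (giving $c=2$) must be established carefully via the PSD inequality $\|G\|_{\mathrm F}\le\operatorname{tr}G$. Any looser per-step constant would still be fine, since it only changes $e^{2k/\ell}$ to some other constant.
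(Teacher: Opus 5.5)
Your proposal is correct and follows the paper's proof essentially step for step. It uses the same Rademacher quadratic-form computation with the PSD bound $\|G\|_{\mathrm F}\le\operatorname{tr}G$ to get the per-step factor $\ell^2(1+2/\ell)$, then conditions and iterates with $\ell=2k$ to reach the ratio $ek^3$. The only cosmetic difference is that you bound $\mathbb E\|Z_a\|_2^4\le 3\|M\|_{\mathrm F}^4$ and split the square into diagonal and off-diagonal terms, whereas the paper bounds $\operatorname{Var}(Q_j)$ and adds variances over the independent rows, which is the same calculation.
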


\begin{proof}
We use the following one-step estimate.

\begin{claim}
Let $d$ and $t$ be positive integers.  Let $X$ be a random matrix in
$\mathbb R^{d\times t}$, and let
$R\in\{-1,1\}^{\ell\times d}$ be a random sign matrix independent of $X$.
Then
\[
  \mathbb E(\|RX\|_{\mathrm F}^2)
  =\ell\,\mathbb E(\|X\|_{\mathrm F}^2)
  \quad\text{and}\quad
  \mathbb E(\|RX\|_{\mathrm F}^4)
  \leq\ell^2(1+2/\ell)\mathbb E(\|X\|_{\mathrm F}^4).
\]
\end{claim}

\begin{proof}
First suppose that $X\in\mathbb R^{d\times t}$ is fixed.

Define $M=XX^{\mathsf T}$, and write $\operatorname{tr}M$ for the sum of its
diagonal entries.  Let
$r_1,\ldots,r_\ell\in\{-1,1\}^d$ be the rows of $R$, and define
$Q_j=r_j^{\mathsf T}Mr_j$ for every $j\in[\ell]$.  The $j$th row of $RX$
is $r_j^{\mathsf T}X$.  Therefore,
\[
\begin{aligned}
  \|RX\|_{\mathrm F}^2
  &=\sum_{j=1}^\ell\|r_j^{\mathsf T}X\|_2^2
   =\sum_{j=1}^\ell r_j^{\mathsf T}XX^{\mathsf T}r_j
   =\sum_{j=1}^\ell Q_j.
\end{aligned}
\]

Since $M$ is symmetric and the coordinates of $r_j$ have mean zero and
square one,
\[
  Q_j=\operatorname{tr}M+
  2\sum_{1\leq u<v\leq d}M_{uv}r_{j,u}r_{j,v}
  \quad\text{and}\quad
  \mathbb E(Q_j)=\operatorname{tr}M.
\]
When the centered expression is squared, every cross term indexed by two
distinct pairs has expectation zero because some coordinate occurs exactly
once.  Hence
\[
  \operatorname{Var}(Q_j)
  =\mathbb E\bigl((Q_j-\mathbb E(Q_j))^2\bigr)
  =4\mathbb E\!\left(\left(\sum_{1\leq u<v\leq d}
    M_{uv}r_{j,u}r_{j,v}\right)^2\right)
  =4\sum_{1\leq u<v\leq d}M_{uv}^2
  \leq2\|M\|_{\mathrm F}^2.
\]
Let $\lambda_1,\ldots,\lambda_d\geq0$ be the eigenvalues of the positive
semidefinite matrix $M$.  Then
\[
  \|M\|_{\mathrm F}^2=\operatorname{tr}(M^2)
  =\sum_{h=1}^d\lambda_h^2
  \leq\left(\sum_{h=1}^d\lambda_h\right)^2
  =(\operatorname{tr}M)^2,
\]
and $\operatorname{tr}M=\operatorname{tr}(XX^{\mathsf T})
=\|X\|_{\mathrm F}^2$.  Since the $Q_j$ are independent copies, linearity of
expectation and additivity of variance give
$\mathbb E(\|RX\|_{\mathrm F}^2)=\ell\operatorname{tr}M=\ell\|X\|_{\mathrm F}^2$ and
$\operatorname{Var}(\|RX\|_{\mathrm F}^2)
\leq2\ell(\operatorname{tr}M)^2$.  Therefore,
\[
  \mathbb E(\|RX\|_{\mathrm F}^4)
  =\operatorname{Var}(\|RX\|_{\mathrm F}^2)
   +(\mathbb E(\|RX\|_{\mathrm F}^2))^2
  \leq2\ell\|X\|_{\mathrm F}^4+\ell^2\|X\|_{\mathrm F}^4
  =\ell^2(1+2/\ell)\|X\|_{\mathrm F}^4.
\]

Since $X$ is a random matrix independent of $R$, conditioning on $X$ gives
$\mathbb E(\|RX\|_{\mathrm F}^2)
=\ell\,\mathbb E(\|X\|_{\mathrm F}^2)$.  It also gives
$\mathbb E(\|RX\|_{\mathrm F}^4)
\leq\ell^2(1+2/\ell)\,\mathbb E(\|X\|_{\mathrm F}^4)$.
\end{proof}

We apply the claim at each flattening.  Let $S_0=\Psi$.  For each $i\in[k]$,
the matrix multiplied by $R_i$ is a flattening of $S_{i-1}$.  It depends only on
$\sigma_1,\ldots,\sigma_n,R_1,\ldots,R_{i-1}$ and is therefore independent
of $R_i$.
Since flattening preserves the Frobenius norm, applying the preceding bounds
inductively gives, for every $i\in[k]$,
\[
\begin{aligned}
  \mathbb E(\|S_i\|_{\mathrm F}^2)
  &=\ell\,\mathbb E(\|S_{i-1}\|_{\mathrm F}^2),\\
  \mathbb E(\|S_i\|_{\mathrm F}^4)
  &\leq \ell^2(1+2/\ell)\mathbb E(\|S_{i-1}\|_{\mathrm F}^4).
\end{aligned}
\]

By iterating from $S_0=\Psi$, using
$\|S_k\|_2=\|S_k\|_{\mathrm F}$, and applying
\Cref{lem:determinant-moments}, we obtain
\[
\begin{aligned}
  \mathbb E(\|S_k\|_2^2/(\ell^k k!))
  &=\mathbb E(\|\Psi\|_{\mathrm F}^2)/k!=N,\\
  \mathbb E((\|S_k\|_2^2/(\ell^k k!))^2)
  &\leq\left(1+\frac2\ell\right)^kk^3N^2
   =\left(1+\frac1k\right)^kk^3N^2.
\end{aligned}
\]
Since $(1+1/k)^k\leq e$, the lemma follows.
\end{proof}

By \Cref{lem:estimator-guarantee} and the median-of-means bound above,
$O^*(\varepsilon^{-2}\log(1/\delta))$ independent trials suffice to return
$\widehat N$ such that
$\Pr[|\widehat N-N|>\varepsilon N]\leq\delta$.  Thus it remains only to compute
each sample within the claimed running time.

\subsection{Sampling from the efficient estimator}
\label{sec:implementation}

Use the random vectors $\sigma_1,\ldots,\sigma_n$ and random matrices
$R_1,\ldots,R_k$ defined in \Cref{sec:estimator}.  Evaluate the ancestors of
the output gate over the exterior algebra $\Lambda(\mathbb Z^k)$ from
\Cref{sec:exterior-algebra}.

For every gate $g$ of degree $i\geq1$ and every $\tau\in[n]^i$, let
$c_g(\tau)$ be the coefficient of $x_\tau$ in the noncommutative polynomial
computed at $g$; since $C$ is a $0$--$1$ circuit, $c_g(\tau)\in\{0,1\}$.
The values below store signed sums of these monomial contributions after
replacing variables by their exterior-algebra labels.  Evaluate these gates in
topological order.  Degree-zero gates store their values in $\{0,1\}$.
For every gate $g$ of degree $i\geq1$ and every $a\in[\ell]$, store
$F_{g,a}\in\Lambda^i(\mathbb Z^k)$.  Addition gates are evaluated
componentwise.  If $g=h x_j$ has degree one, we may assume that $h$ stores
$1$, since otherwise $g$ computes zero and can be ignored.  Define
$F_{g,a}=R_1[a,j]\sigma_j$.  If $g=h x_j$ has degree $i\geq2$, define
$F_{g,a}=(\sum_{b=1}^\ell R_i[a,(b,j)]F_{h,b})\wedge \sigma_j$.

For every gate $g$ of degree $i\geq1$ and every $a_i\in[\ell]$, we claim that
\begin{equation}
\label{eq:circuit-contraction}
\begin{aligned}
  F_{g,a_i}={}&
  \sum_{\tau\in[n]^i}c_g(\tau)
  \sum_{a_1,\ldots,a_{i-1}\in[\ell]}
  R_1[a_1,\tau_1]
  \left(\prod_{t=2}^iR_t[a_t,(a_{t-1},\tau_t)]\right)
  \cdot
  \sigma_{\tau_1}\wedge\cdots\wedge \sigma_{\tau_i}.
\end{aligned}
\end{equation}
For $i=1$, the inner sum is omitted and the product is interpreted as $1$.

We prove~\eqref{eq:circuit-contraction} by induction over the gates.  At a
degree-one product gate $g=h x_j$, we may assume that $h$ stores $1$.  The
coefficient of $x_j$ is then one, all other coefficients are zero, and
$F_{g,a_1}=R_1[a_1,j]\sigma_j$.  At an addition gate
$g=h_1+h_2$, both the coefficients and the stored elements are the sums of
the corresponding values at $h_1$ and $h_2$.

Now let $g=h x_j$ have degree $i\geq2$.  A monomial
$x_\tau=x_{\tau_1}\cdots x_{\tau_i}$ has nonzero coefficient at $g$ only if
$\tau_i=j$, and in this case
$c_g(\tau)=c_h((\tau_1,\ldots,\tau_{i-1}))$.  Substituting the expansion of
each $F_{h,b}$ into the recurrence makes the sum over $b$ the sum over
$a_{i-1}$, introduces the factor
$R_i[a_i,(a_{i-1},\tau_i)]$, and appends $\sigma_{\tau_i}$ to the wedge
product.
This proves~\eqref{eq:circuit-contraction}.

Let $o$ be the output gate.  Since $\Psi(\tau)=c_\tau D_\tau$,
$c_o(\tau)=c_\tau$, and
$\sigma_{\tau_1}\wedge\cdots\wedge \sigma_{\tau_k}=D_\tau e_{[k]}$,
\eqref{eq:compression-coordinates} and
\eqref{eq:circuit-contraction} give, for every $a_k\in[\ell]$,
\[
\begin{aligned}
  F_{o,a_k}
  ={}\sum_{\tau\in[n]^k}c_\tau D_\tau
  \sum_{a_1,\ldots,a_{k-1}\in[\ell]}
  R_1[a_1,\tau_1]
  \left(\prod_{t=2}^kR_t[a_t,(a_{t-1},\tau_t)]\right)e_{[k]}
  ={}S_k(a_k)e_{[k]}.
\end{aligned}
\]
Thus the stored output coefficients determine
$\|S_k\|_2^2/(\ell^k k!)
=(\ell^k k!)^{-1}\sum_{a=1}^\ell S_k(a)^2$ using only rational arithmetic.

\paragraph{Running time.}
Let $L$ be the encoding size of $C$.  After relabeling the variables occurring
in $C$, we may assume that $n\leq L$.  By \Cref{sec:exterior-algebra}, an
element of $\Lambda^i(\mathbb Z^k)$ has at most $2^k$ coefficients.  Since
$\ell=2k$, every gate can be evaluated using $2^k\operatorname{poly}(k)$
integer operations.
For a degree-$i$ gate, each coefficient in
\eqref{eq:circuit-contraction} is a sum of at most $L^i\ell^{i-1}$ terms, each
of which is a coefficient of an $i$-fold wedge of sign vectors and hence has
absolute value at most $i!$.  Thus every integer computed in one trial has
$O(k(1+\log L+\log k))$ bits.  Squaring and summing the output coefficients
changes this bound only by a constant factor.  Therefore every integer
operation within one trial takes time polynomial in $k$ and $L$, and one trial
takes $O^*(2^k)$ time.  Each estimator value has denominator $\ell^k k!$, and
averaging $O(k^3\varepsilon^{-2})$ values adds only
$O(1+\log k+\log(1/\varepsilon))$ bits to its numerator and denominator.  The
group averages have a common denominator, so their median can be selected
using a linear number of comparisons between their numerators.  These
comparisons and the remaining amplification arithmetic take polynomial time.
The amplification in \Cref{sec:estimator} uses
$O^*(\varepsilon^{-2}\log(1/\delta))$ trials, giving a total running time of
$O^*(2^k\varepsilon^{-2}\log(1/\delta))$.

\section{Applications}
\label{sec:applications}

For completeness, we give the noncommutative-polynomial construction
underlying \Cref{thm:directed-paths}.  We also derive, by the same approach, a
result for set packing.

\subsection{Directed paths}
\label{sec:directed-paths}

A directed $k$-path is an ordered tuple $(v_1,\ldots,v_k)$ of distinct
vertices such that $(v_i,v_{i+1})$ is an arc for every $1\leq i<k$.

\directedpaththeorem*

\begin{proof}
Let the input digraph have vertex set $[n]$ and arc set $A$, where $|A|=m$.
If $n<k$, return $0$, so assume $n\geq k$.  Define $Z_0=0$ and
$Z_i=Z_{i-1}x_1$ for $i\in[k]$; thus $Z_i$ is a syntactically homogeneous
zero polynomial of degree $i$.
For every $v\in[n]$, define $P_{v,1}=x_v$.  For $i=2,\ldots,k$ and
$v\in[n]$, define
\[
  P_{v,i}=\left(\sum_{u:(u,v)\in A}P_{u,i-1}\right)x_v,
\]
where an empty sum is represented by $Z_{i-1}$.  The polynomial $P_{v,i}$
stores the directed walks on $i$ vertices that end at $v$.

We prove this claim by induction on $i$.  It holds for $i=1$.  For $i\geq2$,
each summand indexed by an arc $(u,v)$ appends $v$ to a walk ending at $u$.
Conversely, deleting the final vertex of a walk ending at $v$ identifies its
summand.  The monomial records the entire vertex sequence and therefore occurs
with coefficient one.

Let $P_k=\sum_{v\in[n]}P_{v,k}$.  The last variable identifies the endpoint,
so every monomial of $P_k$ still has coefficient one.  Its monomials encode
all directed walks on $k$ vertices, and a monomial is multilinear exactly when
its variables are distinct.  Thus the multilinear monomials of $P_k$ are in
bijection with the directed $k$-paths.

Every intermediate polynomial is $0$--$1$: the predecessor sums combine
disjoint supports, right multiplication is injective on monomials, and the
final sum combines monomials with distinct last variables.  Thus these
recurrences and the final sum form a polynomial-size homogeneous $0$--$1$
right-skew circuit.  Applying \Cref{thm:right-skew} to $P_k$ gives the claimed
approximation and running time.
\end{proof}

\subsection{Set packing}
\label{sec:set-packing}

A $k$-packing is a $k$-element subfamily of pairwise disjoint sets.  Counting
$k$-matchings is $\#\mathrm W[1]$-hard even in bipartite
graphs~\cite{curticapeanMarx2014}.
Liu, Wang, and Wang gave fixed-parameter randomized approximation schemes for
counting packings in families of sets of fixed size~\cite{liuWangWang2018}.
For families of $d$-element sets, the representative-weight method gives a
randomized $(1\pm\varepsilon)$-approximation in
$O^*(2.619^{dk}\varepsilon^{-2})$ time~\cite{lokshtanovSaurabhZehavi2021}.
The following theorem improves this exponential factor to $2^{dk}$.

\begin{theorem}
\label[theorem]{thm:set-packing}
Let $d,k\geq1$ and $0<\varepsilon,\delta<1$.  Given a family $\mathcal E$ of
$m\geq k$ distinct $d$-element sets, there is a randomized algorithm
that returns a $(1\pm\varepsilon)$-approximation to the number of $k$-packings
in $\mathcal E$ with success probability at least $1-\delta$ and runs in
$O^*(2^{dk}\varepsilon^{-2}\log(1/\delta))$ time.
\end{theorem}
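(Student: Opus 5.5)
We reduce the statement to \Cref{thm:right-skew} by building a homogeneous $0$--$1$ right-skew circuit of degree $dk$ whose multilinear monomials correspond to packings. Identify the ground set $U=\bigcup\mathcal E$ with $[n]$, where $n\leq dm$. Fix a total order on $\mathcal E$. For each set $E=\{u_1<\cdots<u_d\}$, let $x_E=x_{u_1}\cdots x_{u_d}$ be the noncommutative monomial listing its elements in increasing order. A $k$-packing $\{E_1,\ldots,E_k\}$ with $E_1<\cdots<E_k$ in the fixed order is encoded by the degree-$dk$ monomial $x_{E_1}\cdots x_{E_k}$. Using ordered chains instead of all orderings avoids a $k!$ overcount; alternatively we could count all orderings and divide by $k!$.

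The circuit is a dynamic program over the order on $\mathcal E$. For $j\in[m]$ and $i\in[k]$, let $P_{j,i}$ be the sum of $x_{E_{s_1}}\cdots x_{E_{s_i}}$ over $s_1<\cdots<s_i=j$. We take $P_{j,1}=x_{E_j}$, built from the constant $1$ by $d$ right multiplications. For $i\geq2$ we set $P_{j,i}=\bigl(\sum_{s<j}P_{s,i-1}\bigr)x_{E_j}$, realized by $d$ successive right-skew product gates. Empty sums are replaced by a syntactically homogeneous zero polynomial $Z_{d(i-1)}$, as in the proof of \Cref{thm:directed-paths}. The output is $P=\sum_j P_{j,k}$. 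The circuit has size polynomial in $m,d,k$, is homogeneous, and is right-skew.

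The main point to verify is that every gate computes a $0$--$1$ polynomial, and this is where some care is needed, because a monomial of degree $di$ might a priori be split into blocks of $d$ in more than one way. Given a monomial, we parse it into consecutive length-$d$ blocks. Each block determines its set $E$ uniquely, because the sets are distinct and each is written in increasing order. Hence a monomial arises from at most one sequence $(s_1,\ldots,s_i)$, the sums in the recurrence have disjoint supports, and all coefficients lie in $\{0,1\}$. Intermediate gates inside a chain of $d$ products compute right multiples of $0$--$1$ polynomials by variables, which are again $0$--$1$, since right multiplication is injective on monomials.

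It remains to identify $N(P)$. A monomial $x_{E_{s_1}}\cdots x_{E_{s_k}}$ is multilinear exactly when its $dk$ entries are distinct, that is, when the sets are pairwise disjoint. Since $s_1<\cdots<s_k$, these monomials are in bijection with $k$-packings, so $N(P)$ is the number of $k$-packings. Applying \Cref{thm:right-skew} with degree $dk$ then gives a $(1\pm\varepsilon)$-approximation with success probability at least $1-\delta$ in $O^*(2^{dk}\varepsilon^{-2}\log(1/\delta))$ time. Here the polynomial factors in $dk$ and in the circuit size are absorbed by $O^*$.
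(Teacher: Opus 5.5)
Your proposal is correct and follows essentially the paper's reduction: one noncommutative block per set, subfamilies listed in increasing input order, the $0$--$1$ property from the unique decomposition into length-$d$ blocks together with distinctness of the sets, multilinearity exactly for pairwise disjoint sets, and then \Cref{thm:right-skew} applied at degree $dk$. The only difference is cosmetic: you index the dynamic program by the last chosen set and sum over predecessors, as in the path construction, whereas the paper uses the include/exclude recurrence $P_{h,j}=P_{h-1,j}+P_{h-1,j-1}M_h$, which gives a slightly smaller circuit that is likewise of polynomial size.
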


\begin{proof}
Let $\mathcal E=\{E_1,\ldots,E_m\}$ be the input family.
Index a variable $x_u$ by every ground element $u$.  For each $h\in[m]$,
choose an ordering $u_{h,1},\ldots,u_{h,d}$ of $E_h$ and define
$M_h=x_{u_{h,1}}\cdots x_{u_{h,d}}$.  We list selected sets in increasing
input order because a commutative product would merge distinct subfamilies
with the same union.

For every $h\in[m]$ and $0\leq j\leq\min\{h,k\}$, the polynomial $P_{h,j}$
stores the $j$-element subfamilies of $\{E_1,\ldots,E_h\}$.  Define
$P_{0,0}=1$ and $P_{h,0}=1$ for $h\in[m]$.  For every $h\in[m]$ and
$1\leq j\leq\min\{h,k\}$, define
\[
  P_{h,j}=
  \begin{cases}
    P_{h-1,j}+P_{h-1,j-1}M_h,&j<h,\\
    P_{h-1,h-1}M_h,&j=h.
  \end{cases}
\]
Expanding each multiplication by $M_h$ into $d$ consecutive right-product
gates implements these recurrences by a polynomial-size homogeneous
right-skew circuit with output $P_{m,k}$.

Induction on $h$, using $P_{h,0}=1$ as the base state, shows that for every
$1\leq j\leq\min\{h,k\}$, the monomials of $P_{h,j}$ are exactly
$M_{i_1}\cdots M_{i_j}$ indexed by the sequences
$1\leq i_1<\cdots<i_j\leq h$.  The two terms of the recurrence correspond to
subfamilies that omit or contain $E_h$, respectively.  These cases are
disjoint and exhaustive.  Each such monomial has a unique decomposition into
blocks of length $d$, and the input sets are distinct, so different
subfamilies give different monomials.  Each monomial has coefficient one.
Each intermediate right-product gate appends a fixed prefix of $M_h$ and
therefore also computes a $0$--$1$ polynomial.  Hence the circuit above is a
homogeneous $0$--$1$ right-skew circuit.

Such a monomial is multilinear exactly when the selected sets are pairwise
disjoint.  Thus $P_{m,k}$ is a homogeneous $0$--$1$ polynomial of degree $dk$,
and $N(P_{m,k})$ is the number of $k$-packings in $\mathcal E$.  Applying
\Cref{thm:right-skew} gives the claimed approximation and running time.
\end{proof}

\section*{Acknowledgments}

This work was supported in part by JST CREST Grant Number JPMJCR24Q2 and JST
ERATO Grant Number JPMJER2301.

\section*{Declaration of AI use}

The original algorithmic ideas underlying this work were proposed by OpenAI's
GPT-5.6 Pro.  The author subsequently developed and substantially simplified
the proofs, carefully checked all mathematical arguments, and takes full
responsibility for the mathematical content of the paper.

\end{document}